\newcommand{\xtic}[1]{(#1,0) -- ++(0,-0.25)}
\newcommand{\ytic}[1]{(0,#1) -- ++(-0.25,0)}
\newcommand{\I}[1]{I_{\set{#1}}}
\long\def\/*#1*/{}
\renewcommand{\P}{\mathbb{P}}
\renewcommand{\L}{\mathcal{L}}
\newcommand{\X}{\mathcal{X}}
\newcommand{\Y}{\mathcal{Y}}
\newcommand\redout{\bgroup\markoverwith{\textcolor{red}{\rule[.5ex]{2pt}{0.4pt}}}\ULon}
\newtheorem{theorem}{Theorem}
\newtheorem{definition}{Definition}
\newtheorem{lemma}{Lemma}      
\newcommand{\spmf}[2]{\pmf{}{#2}}
\newcommand{\support}[2][n]{\mathcal{#2}_{#1}}
\title{Privacy Leakage in Discrete-Time Updating Systems}
\author{Nitya Sathyavageeswaran, Roy D. Yates, Anand D. Sarwate, Narayan Mandayam \\
Department of Electrical and Computer Engineering, Rutgers University\\ E-mail: \{nitya.s, anand.sarwate\}@rutgers.edu, \{ryates, narayan\}@winlab.rutgers.edu}
\begin{document}

\maketitle

\begin{abstract}
A source generates time-stamped update packets that are sent to a server and then forwarded to a monitor. This occurs in the presence of an  adversary that can infer information about the source by observing the output process of the server. The server wishes to release updates in a timely way to the monitor but also wishes to minimize the information leaked to the adversary.  
We analyze the trade-off between the age of information (AoI) and the maximal leakage for systems in which the source generates updates as a Bernoulli process. For a time slotted system in which sending an update requires one slot, we consider three server policies:
(1) Memoryless with Bernoulli Thinning (MBT): arriving updates are queued with some probability  and head-of-line update is released after a geometric holding time; (2) Deterministic Accumulate-and-Dump (DAD):  the most recently generated update (if any) is released after a fixed time; (3) Random Accumulate-and-Dump (RAD):  the most recently generated update (if any) is released after a geometric waiting time. We show that for the same maximal leakage rate, the DAD policy achieves lower age compared to the other two policies but is restricted to discrete age-leakage operating points.   
\end{abstract}

\begin{IEEEkeywords}
Age of information, maximal leakage, status updates, Bernoulli process.
\end{IEEEkeywords}

\section{Introduction}

A smart home has a range of devices that can be accessed or controlled remotely using the internet. The various sensors in a smart home send time-stamped updates about the temperature, humidity, power consumption, etc.~to a destination monitor.
%
%
In spite of the advantages offered by a smart home system, there is also a potential loss of privacy. 
Adversaries could infer 
sensitive information about the home occupants from the update packets generated from the various sensors.
Moreover, timeliness of the updates is important in many applications, but  improving the timeliness of delivered updates can increase the information leaked to the adversary. 
When the timeliness of these updates is important, an age of information metric is useful in optimizing these systems. The age of information metric tells us how much time has elapsed  since the generation of the most recent update that has been received at the monitor~\cite{kaul2012real}.


In this paper we study trade-offs between privacy and age. We consider a model for a smart home system in which a source generates time-stamped updates that are sent to a monitor through a server. There is an adversary present at the  monitor that can infer information about the source from the packet arrival process. We measure privacy using the maximal leakage metric introduced by Issa, Kamath, and Wagner~\cite{7460507}; this measures the maximal multiplicative gain that the adversary can guess any function of the original data from the observed data.  The maximal leakage increases when we minimize  age in these systems.  We design service policies in order to reduce the maximal leakage and see the effect on the age of information (AoI) for these systems. 

The time average AoI  for various systems has been extensively studied, including the  first-come first-served (FCFS) M/M/1, M/D/1 and D/M/1 queues, and the last-come first-served (LCFS) queues~\cite{kaul2012real, 7364263, 6875100, kaul2012status,najm2016age, bedewy2016optimizing,  yates2018age, najm2018status}. AoI  has also been analyzed for other continuous time queueing disciplines~\cite{ Yates2020TheAO,  8006592, soysal2019age, 8437907, talak2019heavy, kam2016controlling, kam2018age, inoue2018analysis} and for various discrete time queues~\cite{9148775, tripathi2019age, article, talak2019optimizing}. 
Minimizing the timing information leaked, while keeping the status updates timely in an energy harvesting channel is analyzed in~\cite{ozel2022state}. 
%


Minimizing the maximal leakage subject to a cost constraint has been studied in~\cite{9162276, 8613385, 8006634, 7541353}. Issa et al.~\cite{8943950} derive the maximal leakage for both an M/M/1 queue and an ``accumulate-and-dump'' system. 

Our contribution is to study the trade-off between maximal leakage and the age. We examine the discrete time analogues of the M/M/1 queue and the ``accumulate-and-dump'' system.  We also introduce a ``Random Accumulate-and-Dump'' service policy. We derive the age and maximal leakage for these systems. 
This raises interesting questions about the general problem of balancing timeliness and privacy in communication systems.  


\section{System Model and Paper Overview}

Our 
discrete time  system consists of a source, server, monitor and an adversary as shown in Figure \ref{fig:model}. Time passes in integer slots with slot $n\ge0$ denoting the time interval $[n,n+1)$. The transmission of a packet requires one slot. A packet sent in slot $n$ (from source to server or from server to monitor) is received at the start of slot $n+1$.  

\subsection{Information Leakage}
To characterize information leakage, packet transmissions from the source to the server are indicated  by the binary sequence $X^n=(X_1,\ldots, X_n)$ such that $X_t=1$ if the server receives a packet from the source at the start of slot $t$. This packet will have been generated and transmitted by the source in slot $t-1$. Packet transmissions from  the server to the monitor are similarly indicated by the binary sequence $Y^n=(Y_1,\ldots, Y_n)$ such that $Y_t=1$ if the server sends a packet in slot $t$. The adversary observes the server transmission sequence $Y^n$. The updating policy of the source, coupled with the updating policy of the server induces a joint PMF $\pmf{X^n,Y^n}{x^n,y^n}$. From this PMF, the support set of $X^n$ is denoted 
$\support{X}=\set{x^n\colon \pmf{X^n}{x^n}>0}$.
The support set $\support{Y}$ of $Y^n$ is defined analogously.  We measure the information leaked to the adversary by the maximal leakage metric.
\begin{definition}[Issa et al.~\cite{8943950}]
A joint distribution $P_{X^nY^n}$ on finite alphabets $\mathcal{X}^n$ and $\mathcal{Y}^n$ has  maximal leakage 
\begin{align}
\L(X^n\to Y^n)=\log \sum_{y^n\in \support{Y}} \max_{x^n\in\support{X}} \pmf{Y^n|X^n}{y^n|x^n}. \eqnlabel{maxL_def}
\end{align}
\end{definition}
The key technical challenge in this work is to find the leakage maximizing input sequence $x^n\in \support{X}$ for each possible output sequence $Y^n=y^n$. Note that the maximal leakage depends on the arrival process $X^n$ only through its support $\support{X}$.  We find the leakage maximizing input $x^n$ for {\em full support} arrival processes satisfying 
$\support{X}=\set{0,1}^n$.
We further assume the source sends fresh updates as a rate $\lambda$ Bernoulli process since this is the simplest class of arrival processes with full support.

%


\subsection{Age of Information}
For the evaluation of update timeliness, we employ a discrete-time age process model that is consistent with prior work~\cite{Kosta2019QueueMF}. The source generates time-stamped update packets (or simply updates) that are sent to the server.  An update generated by the source in slot $t$ is based on a measurement of a process of interest that is taken at the beginning of the slot and has time-stamp $t$.  At the end of slot $t$, or equivalently at the start of slot $t+1$, that packet will have age $1$. In slot $t+j$, this update will have age $j$. We say one packet is fresher than another if its age is smaller. 

An observer of these updates measures timeliness by a discrete-time {\em age} process $A(t)$ 
that is constant over a slot and equals the age of the freshest packet received prior to slot $t$. When $u(t)$ denotes the time-stamp of the freshest packet observed/received prior to slot $t$, the age in slot $t$ is $A(t)=t-u(t)$. We characterize the timeliness performance of the system by the average age of information (AoI) at the monitor.  
\begin{definition}[Kaul et al.~\cite{kaul2012real}]
A stationary ergodic age process has age of information (AoI) 
\begin{align}
   \E{A(t)}=\limty{T} \frac{1}{T}\sum_{t=0}^{T-1} A(t)\eqnlabel{sum-age},
\end{align} 
where $\E{\cdot}$ is the expectation operator.
\end{definition}

%

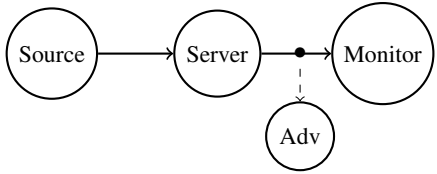
\begin{figure}[t]
\centering
\begin{tikzpicture}[node distance=1.1cm]
\node[draw,circle,thick] (source) {\small Source};
\node (mid1)[right of=source] {};
\node[draw,circle,thick] (server)[right of=mid1] {\small Server};
\node (mid2)[right of=server] {$\bullet$};
\node[draw,circle,thick] (mon)[right of=mid2] {\small Monitor};
\node[draw,circle,thick] (oppo)[below of=mid2] {\small Adv};
\draw[->, thick]  (source.east) -- (server.west);
\draw[->, thick]  (server.east) -- (mon.west);
\draw[->,dashed] (mid2) -- (oppo);
\end{tikzpicture}
\caption{A source sends status updates through a server to a monitor. An adversary (Adv) observes transmissions from the server to the monitor.}
\label{fig:model}
\end{figure}

\subsection{Server Policies}
\label{sec:serverpolicies}
 The server wishes to balance the timing information leaked to the adversary against the AoI at the monitor. To characterize these conflicting objectives, we explore age-leakage trade-offs for three server policies:  



\begin{itemize}
\item \textbf{Memoryless with Bernoulli thinning (MBT)}: The server admits each arriving update into a first-come first-served queue with probability $\alpha$. 
If the queue is not empty at the start of a slot, the update at the head of the queue is sent to the monitor with probability $\mu$, independent of transmissions in other slots. 
    
\item \textbf{Deterministic Accumulate-and-Dump (DAD)}: The server stores the freshest update received  from the source.
    For a fixed parameter $\tau$, the server sends its stored update every $\tau$ slots.
    
\item \textbf{Random Accumulate-and-Dump (RAD)}: The server stores the freshest update received from the source.  In each slot $t$, the server sends its stored update with probability $\mu$, independent of transmissions in other slots. 
\end{itemize}

With rate $\lambda$ Bernoulli arrivals, the MBT server acts as a Geo/Geo/1 queue with effective arrival rate $\alpha\lambda$. Furthermore, while the service time of an update is always one slot, each update spends a geometric number of slots at the head-of-line (HOL) position of the queue and the departure process is indistinguishable from that of a discrete-time Geo/Geo/1 queue. It is the discrete-time version of the M/M/1 queue.  

The 
DAD policy 
was introduced Issa et al.~\cite{8943950} with the generalization that up to $n$ packets are dumped at a time.  Here, we focus on the special case of DAD in which no more than $n=1$ packet  is ``accumulated'' and dumped because maximal leakage grows linearly with $n$~\cite{8943950} and because the AoI is the same for all $n\ge1$ as long as the dumped updates include the most recent update. For the same reasons, the RAD server also accumulates only one packet.  We  further note that with Bernoulli arrivals, RAD  is the discrete-time analogue of the M/M/1/1 queue with preemption in service; the time an update spends in the HOL position is geometric and in each time slot the HOL update may be replaced by a new arrival. 

\section{Maximal Leakage}

A fixed arrival sequence $x^n$ and service policy induces a conditional PMF $\pmf{Y^n|X^n}{y^n|x^n}$. For the MBT and RAD servers, the service time of a packet is geometric with parameter $\mu$. 
For these servers, we employ the 
following lemma 
to calculate the maximal leakage.
\begin{lemma}
\label{max_claim}
For the MBT 
and RAD 
servers with full support arrival processes,
\begin{align}
    \max_{x\in {\mathcal{X}}^n} \pmf{Y^n|X^n}{y^n|x^n}=\mu^{\sum_{i=1}^{n} y_i}
\end{align} and this maximum is achieved when $x^n=y^n$.
\end{lemma}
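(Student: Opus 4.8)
The plan is to prove the two claims separately: the upper bound $P_{Y^n|X^n}(y^n|x^n)\le \mu^{\sum_{i=1}^n y_i}$ for every input $x^n$, and then exhibit $x^n=y^n$ as a choice attaining it. I would begin by factoring the conditional PMF slot-by-slot via the chain rule,
\[
P_{Y^n|X^n}(y^n\mid x^n)=\prod_{t=1}^{n} P\bigl(Y_t=y_t\mid Y^{t-1}=y^{t-1},\,X^n=x^n\bigr).
\]
For both servers the relevant state at the start of slot $t$ is a single bit — whether a packet is present to be transmitted — and this bit is determined by the arrivals $x^t$ together with the past departures $y^{t-1}$ (for RAD a stored packet may be refreshed by a new arrival, and for MBT the HOL packet is whatever sits at the front of the FCFS queue). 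The crucial structural fact I would invoke is that the service time is geometric with parameter $\mu$, so by memorylessness the probability of a transmission in slot $t$, \emph{given that a packet is present}, is exactly $\mu$ regardless of how long that packet has already waited.

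With this, each per-slot factor falls into one of the cases: if a packet is present then $P(Y_t=1\mid\cdot)=\mu$ and $P(Y_t=0\mid\cdot)=1-\mu$; if the server is empty then $P(Y_t=0\mid\cdot)=1$ and $P(Y_t=1\mid\cdot)=0$. In every case the factor is at most $\mu$ when $y_t=1$ and at most $1$ when $y_t=0$. Multiplying over $t$ immediately yields $P_{Y^n|X^n}(y^n\mid x^n)\le\mu^{\sum_{i=1}^n y_i}$ for all $x^n$, which is the upper-bound half of the lemma.

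To show tightness I would take $x^n=y^n$ and check that it realizes the bound with equality and no wasted factors. Placing an arrival in exactly the slots where $y_t=1$ makes a packet available at the start of that slot (it arrives and is eligible for transmission in the same slot $t$), contributing a factor $\mu$; in every slot with $y_t=0$ there is no arrival and, since the server empties right after each transmission (preemptive single-packet storage for RAD, an emptied queue for MBT), the server is idle and contributes a factor of exactly $1$ rather than $1-\mu$. Hence the product collapses to $\mu^{\sum_{i=1}^n y_i}$. I would also verify feasibility: because each departure is fed by its own arrival and the server returns to empty immediately afterward, the trajectory $y^n$ is realizable under $x^n=y^n$ with no scheduling conflict, so $y^n\in\support{Y}$ whenever the support is full.

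The main obstacle is the upper bound, specifically ruling out that some cleverly chosen $x^n$ — for instance one that front-loads many arrivals so that a packet is almost surely waiting before each departure — could beat $\mu^{\sum_{i=1}^n y_i}$. The slot-by-slot factorization is exactly what defuses this: any slot in which a packet is present but not transmitted contributes the strict penalty $1-\mu<1$, so additional arrivals can only introduce such penalties and never raise the probability, and the optimum is to keep the server empty outside the departure slots. The second delicate point, which the geometric/memoryless service handles, is that the per-departure factor is exactly $\mu$ and not a quantity depending on the residual service time; this is what makes the bound achievable rather than merely an inequality.
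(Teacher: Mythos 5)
Your proof is correct and takes essentially the same approach as the paper's: the paper formalizes your slot-by-slot factorization as an induction on $n$, introduces the same packet-available indicator (its $W_{k+1}$, a deterministic function of $(y^{t-1},x^t)$) to obtain the per-slot bounds of $\mu$ when $y_t=1$ and $1$ when $y_t=0$, and identifies the same just-in-time maximizer $x^n=y^n$ that keeps the server idle in the zero slots. The only difference is presentational, in that you unroll the telescoping product over all slots at once rather than peeling off the last slot inductively.
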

The proof follows by an induction argument; details are in the extended version of this manuscript.
Lemma~\ref{max_claim} shows that for a given output sequence $y^n$ from either the MBT or the RAD server, the maximal leakage is maximized by a ``just-in-time'' input $x^n=y^n$ in which the source generates each update in the slot just prior to its departure from the server. The MBT and the RAD server share the property that  the HOL update in each slot is transmitted with probability $\mu$. The just-in-time input $x^n=y^n$ maximizes  $\pmf{Y^n|X^n}{y^n|x^n}$ because it precludes departures from the server prior to the departures specified by the output $y^n$. 

\begin{theorem}
\label{Theorem1}
For full support arrival processes, the maximal leakage rate for the MBT and RAD servers is given by
\begin{align}
    \frac{1}{n} \L(X^n\to Y^n)=\log (1+\mu).
\end{align}
\end{theorem}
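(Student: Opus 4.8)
The plan is to combine Lemma~\ref{max_claim} with the definition of maximal leakage in \eqnref{maxL_def} and then evaluate the resulting sum over all output sequences $y^n \in \support{Y}$. Since the MBT and RAD servers have full support arrival processes, and since every binary output sequence is realizable (for any $y^n$ there is an input—e.g. the just-in-time input $x^n=y^n$—that produces it with positive probability), I expect that $\support{Y}=\set{0,1}^n$. This identification of the output support is the first thing I would verify carefully, because the sum in \eqnref{maxL_def} ranges precisely over $\support{Y}$, and the clean product form of the answer depends on summing over \emph{all} of $\set{0,1}^n$.

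Granting that, the argument is essentially a direct substitution followed by a factorization. By Lemma~\ref{max_claim}, for each fixed $y^n$ we have $\max_{x^n\in\support{X}} \pmf{Y^n|X^n}{y^n|x^n}=\mu^{\sum_{i=1}^n y_i}$. Plugging this into \eqnref{maxL_def} gives
\begin{align}
    \L(X^n\to Y^n)=\log \sum_{y^n\in\set{0,1}^n} \mu^{\sum_{i=1}^n y_i}.
\end{align}
The key step is to recognize that the summand depends on $y^n$ only through its Hamming weight, so the sum factors as a product over coordinates: each coordinate $y_i$ contributes a factor $\sum_{y_i\in\set{0,1}} \mu^{y_i}=1+\mu$, giving $\sum_{y^n} \mu^{\sum_i y_i}=(1+\mu)^n$. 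Taking the logarithm yields $\L(X^n\to Y^n)=n\log(1+\mu)$, and dividing by $n$ gives the claimed rate $\log(1+\mu)$.

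I anticipate the main obstacle to be the support claim rather than the algebra: one must argue that every $y^n\in\set{0,1}^n$ lies in $\support{Y}$, so that no terms are missing from the sum, and conversely that no $y^n$ outside $\set{0,1}^n$ contributes (which is automatic since $Y^n$ is binary). For the MBT server this requires checking that an output with a $1$ in a given slot is consistent with some admissible arrival sequence and service realization of positive probability; the just-in-time input from Lemma~\ref{max_claim} already exhibits such a realization for each $y^n$, so the full binary cube is covered. Once the output support is pinned down as $\set{0,1}^n$, the remainder is the routine factorization above, and the two servers are handled simultaneously because Lemma~\ref{max_claim} gives them the identical per-sequence maximum $\mu^{\sum_i y_i}$.
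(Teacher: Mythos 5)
Your proof is correct and follows essentially the same route as the paper: substitute Lemma~\ref{max_claim} into the definition of maximal leakage and factorize the sum over the binary cube coordinate-by-coordinate to get $(1+\mu)^n$, hence rate $\log(1+\mu)$. If anything, you are slightly more careful than the paper, which sums over all of $\set{0,1}^n$ without comment, whereas you explicitly justify $\support{Y}=\set{0,1}^n$ via the just-in-time input.
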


\begin{proof}
Applying Lemma \ref{max_claim} to \eqnref{maxL_def} yields
 \begin{align}
    \L(X^n\to Y^n)&= \log \sum_{y^n\in \Y^n} \mu^{\sum_{i=1}^{n}y_i}\nn
    &=\log \sum_{y_1=0}^{1}\mu^{y_1}\cdots \sum_{y_n=0}^{1}\mu^{y_n}\nn
    &=\log {(\mu^0+\mu^1)}^n =n\log (1+\mu).
\end{align}
\end{proof}








For the DAD policy, the most recently generated packet is dumped after every $\tau$ slots and $Y^n$ is  a deterministic function of $X^n$.  
Defining $K=\floor{n/\tau}$,
 $Y^n$ has the form 
 \begin{align}\eqnlabel{Yn-DADsupport}
Y^n=(0^{\tau-1},Y_\tau,0^{\tau-1},Y_{2\tau},\ldots,Y_{K\tau},0^{n-K\tau})
\end{align}
where $Y_{k\tau}=0$ if and only if $(X_{(k-1)\tau+1},\cdots, \X_{k\tau})=0^\tau$  and otherwise  $Y_{k\tau}=1$. This structure simplifies the leakage calculation.
\begin{lemma}
\label{Claim_2}
For the DAD policy, $\max_{x\in {\mathcal{X}}^n} \pmf{Y^n|X^n}{y^n|x^n}=1$ and the maximum is achieved when $x^n=y^n$.
\end{lemma}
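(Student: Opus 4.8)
The plan is to exploit the observation, already recorded just before the statement, that under DAD the output $Y^n$ is a \emph{deterministic} function of the input $X^n$. Hence the conditional PMF $\pmf{Y^n|X^n}{y^n|x^n}$ is $\{0,1\}$-valued: it equals $1$ exactly when the policy maps $x^n$ to $y^n$, and $0$ otherwise. The maximum over $x^n\in\X^n$ is therefore $1$ whenever some input produces $y^n$, and $0$ otherwise. Since the leakage formula \eqnref{maxL_def} sums only over $y^n\in\support{Y}$ --- equivalently, over $y^n$ of the block form \eqnref{Yn-DADsupport} --- each such $y^n$ is produced by at least one input, so the maximum is $1$. (For $y^n$ outside this support the conditional PMF is identically zero and the statement is read vacuously.) This gives the claimed value $1$.

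It remains to exhibit a maximizer, and the cleanest choice is the ``just-in-time'' input $x^n=y^n$; here I would verify directly that feeding $y^n$ into the DAD server reproduces $y^n$. By \eqnref{Yn-DADsupport} an admissible $y^n$ is supported only on the coordinates $\tau,2\tau,\ldots,K\tau$, so in the $k$th input block $(x_{(k-1)\tau+1},\ldots,x_{k\tau})$ the choice $x^n=y^n$ makes every entry zero except possibly the last, $x_{k\tau}=y_{k\tau}$. I would then check this against the defining rule $Y_{k\tau}=0 \iff (x_{(k-1)\tau+1},\ldots,x_{k\tau})=0^\tau$: if $y_{k\tau}=0$ the block is all zero and the server emits $0$, while if $y_{k\tau}=1$ the last entry is $1$, the block is nonzero, and the server emits $1$. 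In both cases the output coordinate equals $y_{k\tau}$. The trailing partial block of length $n-K\tau<\tau$ carries no dump, so those output coordinates are forced to $0$ and match the corresponding zero entries of $x^n=y^n$.

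For such a transparent statement I do not anticipate a substantive obstacle; the only points requiring care are administrative. First, $x^n=y^n$ must be a legal input, which is guaranteed by the full-support assumption $\support{X}=\set{0,1}^n$ so that every binary $y^n$ lies in the input alphabet. Second, the block bookkeeping must be aligned so that the dump instants fall exactly on the coordinates $k\tau$ carrying $y_{k\tau}$; once that indexing is pinned down, the two-case verification above is immediate and completes the proof.
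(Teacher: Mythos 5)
Your proposal is correct and takes essentially the same route as the paper: exhibit the ``just-in-time'' input $x^n=y^n$, observe that determinism of the DAD map makes $\pmf{Y^n|X^n}{y^n|y^n}=1$, and note that this attains the trivial unity upper bound on any conditional probability. The only difference is one of detail --- the paper simply asserts that $x^n=y^n$ yields output $y^n$, whereas you verify it block-by-block against \eqnref{Yn-DADsupport}, which is a harmless (and arguably welcome) elaboration of the same argument.
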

\begin{proof}
For a given output $Y^n=y^n$ from the DAD server,  the input $x^n=y^n$ implies $\pmf{Y^n|X^n}{y^n|x^n}=1$. This is the maximizing input since it achieves the unity upper bound. 
 \end{proof}
We note that the maximizing $x^n$ may not be unique.
\begin{theorem}\thmlabel{DAD-leakage}
The DAD
policy has maximal leakage rate 
\begin{align}
\frac{\L(X^n\to Y^n)}{n}
&=\frac{\log 2}{n}\floor{\frac{n}{\tau}}.
\end{align}
\end{theorem}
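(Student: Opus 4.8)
The plan is to reduce the leakage computation to counting the output support set $\support{Y}$, and then to count it directly from the structural form in \eqnref{Yn-DADsupport}. First I would invoke Lemma~\ref{Claim_2}, which establishes that $\max_{x^n} \pmf{Y^n|X^n}{y^n|x^n}=1$ for every admissible output $y^n$. Substituting this into the maximal leakage definition \eqnref{maxL_def} collapses the inner maximization, leaving
\begin{align}
\L(X^n\to Y^n)=\log \sum_{y^n\in\support{Y}} 1 = \log |\support{Y}|.
\end{align}
The entire problem thus becomes combinatorial: determine the cardinality of the DAD output support.

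Next I would read off the degrees of freedom from \eqnref{Yn-DADsupport}. A DAD output $Y^n$ is zero in every slot except possibly at the dump instants $\tau, 2\tau, \ldots, K\tau$, where $K=\floor{n/\tau}$; all remaining coordinates are forced to $0$. Hence the only bits that can vary are $(Y_\tau, Y_{2\tau}, \ldots, Y_{K\tau})$, giving the immediate upper bound $|\support{Y}| \le 2^K$.

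The key step---and the only place the full-support assumption enters---is to show that all $2^K$ patterns are realizable, so that the bound is tight. Here I would use that $Y_{k\tau}=1$ precisely when the $k$-th arrival window $(X_{(k-1)\tau+1}, \ldots, X_{k\tau})$ is not identically zero, and that these windows are \emph{disjoint} blocks of the input sequence. Because the arrival process has full support $\support{X}=\set{0,1}^n$, each block can independently be set to the all-zero word (forcing $Y_{k\tau}=0$) or to any nonzero word (forcing $Y_{k\tau}=1$), and these choices are made independently across blocks. Therefore every binary pattern on the $K$ dump bits is achieved by some $x^n$, yielding $|\support{Y}|=2^K=2^{\floor{n/\tau}}$.

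Combining these, $\L(X^n\to Y^n)=\log 2^{\floor{n/\tau}} = \floor{n/\tau}\log 2$, and dividing by $n$ gives the claimed rate. I expect the only real subtlety to be the realizability argument: one must confirm that the disjointness of the arrival windows makes the dump bits independently controllable, which is exactly what full support guarantees. Everything else is a direct substitution.
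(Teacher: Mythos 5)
Your proposal is correct and takes essentially the same route as the paper: apply Lemma~\ref{Claim_2} to reduce the leakage to $\log\abs{\support{Y}}$, then count $\abs{\support{Y}}=2^{\floor{n/\tau}}$ from the block structure in \eqnref{Yn-DADsupport}. The only difference is that you explicitly justify the realizability of all $2^K$ dump patterns via the full-support assumption and the disjointness of the arrival windows, a detail the paper's proof asserts implicitly when it states the size of $\support{Y}$.
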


\begin{proof}
Since $Y^n$ has the form \eqnref{Yn-DADsupport}, the support set $\support{Y}$ has size $\abs{\support{Y}}=2^K=2^{\floor{n/\tau}}$.  Applying Lemma~\ref{Claim_2} to \eqnref{maxL_def}, 
 \begin{align}
    \L(X^n\to Y^n)&= \log \sum_{\mathclap{y^n\in \support{Y}}} 1
    =\log {2}^{\left \lfloor{n/\tau}\right \rfloor} \eqnlabel{dump_instance}
    =\floor{\frac{n}{\tau}} \log 2.
\end{align}
\end{proof}
Since  $\limty{n}\floor{n/\tau}/n=1/\tau$, it follows from \Thmref{DAD-leakage} that the 
DAD policy has maximal leakage rate
\begin{align}
    \limty{n}\frac{\L(X^n\to Y^n)}{n}&=\frac{\log 2}{\tau}.
\end{align}

\section{Age of Information}
The MBT server with rate $\lambda$ Bernoulli arrivals and admission probability $\alpha$ is identical to a Geo/Geo/1 queue with arrival rate $\alpha\lambda$. 
If the server queue is not empty at the start of slot $t$, the server sends the head-of-line update with probability $\mu$. Using the discrete time analogue of the M/M/1 graphical AoI analysis~\cite{kaul2012real}, the AoI of the Geo/Geo/1 queue has been previously derived by Kosta et al.~\cite{Kosta2019QueueMF}  and equals
\begin{align}\eqnlabel{GeoAge}
    \E{A_{\text{MBT}}}=\frac{1}{\alpha\lambda}+\frac{1-\alpha\lambda}{\mu-\alpha\lambda}-\frac{\alpha\lambda}{\mu^ 2}+\frac{\alpha\lambda}{\mu}.
\end{align}

While the graphical method could be employed for age analysis of the DAD and RAD systems, here we instead employ the sampling of age processes approach~\cite{Yates2020TheAO}. In the implementation of the DAD/RAD servers, no update is dumped if no update arrived in the preceding inter-dump period. However, for the purpose of AoI analysis, we make a different assumption: if no update arrives before the dump attempt, then the server resends its previously dumped update. This approach is an example of the method of fake updates introduced in Yates and Kaul~\cite{6284003}. With respect to the age at the monitor, in the absence of an update to dump, it makes no difference if the server sits idle  or if the server repeats sending the prior update. Neither changes the age at the monitor. 

However this fake update approach 
simplifies the age analysis.  We can now view the monitor as {\em sampling} the update process of the server. With each dump instance, the monitor receives the freshest update of the server,  resetting the age at monitor to the age of the dumped update. In the same way, we can view the server as  sampling the always-fresh update process of the source.

For both the DAD and RAD systems, we observe that the source offers fresh updates to the server as a renewal process. The dump attempts of the DAD and RAD servers are also renewal processes. Specifically, the DAD server attempts to release its freshest update every $\tau$ slots while the RAD server has dump attempts with iid geometric inter-dump times.

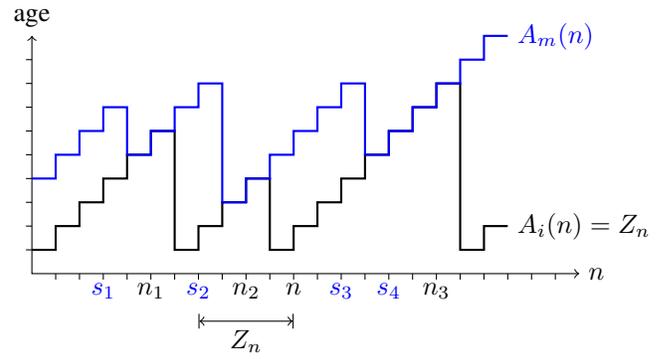
\begin{figure}[t]
    \centering
\begin{tikzpicture}[scale=\linewidth/28cm]
\draw [<->] (0,10) node [above] {age} -- (0,0) -- (23,0) node [right] {$n$};
\draw [thin]\xtic{1}\xtic{2}\xtic{3}\xtic{4}\xtic{5}\xtic{6}\xtic{7}
\xtic{8}\xtic{9}\xtic{10}\xtic{11}\xtic{12}\xtic{13}\xtic{14}\xtic{15}\xtic{16}\xtic{17}\xtic{18}\xtic{19}\xtic{20}\xtic{21}\xtic{22}; 
\draw [thin]\ytic{1}\ytic{2}\ytic{3}\ytic{4}\ytic{5}\ytic{6}\ytic{7}\ytic{8}\ytic{9};
\draw [thick] 
(0,1) -- ++(1,0) -- 
++(0,1) -- ++(1,0) --
++(0,1) -- ++(1,0) --
++(0,1) -- ++(1,0) --
++(0,1) -- ++(1,0) --
++(0,1) -- ++(1,0) --
++(0,-5)-- ++(1,0) -- 
++(0,1) -- ++(1,0) --
++(0,1) -- ++(1,0) --
++(0,1) -- ++(1,0) --
++(0,-3)-- ++(1,0) -- 
++(0,1) -- ++(1,0) --
++(0,1) -- ++(1,0) --
++(0,1) -- ++(1,0) --
++(0,1) -- ++(1,0) --
++(0,1) -- ++(1,0) --
++(0,1) -- ++(1,0) --
++(0,1) -- ++(1,0) --
++(0,-7) -- ++(1,0) -- 
++(0,1) -- ++(1,0) node [right] {$A_i(n)=Z_n$};
\draw [thick,blue] 
(0,4) -- ++(1,0) -- 
++(0,1) -- ++(1,0) --
++(0,1) -- ++(1,0) --
++(0,1) -- ++(1,0) --
++(0,-2)-- ++(1,0) -- 
++(0,1) -- ++(1,0) --
++(0,1) -- ++(1,0) --
++(0,1) -- ++(1,0) --
++(0,-5)-- ++(1,0) -- 
++(0,1) -- ++(1,0) --
++(0,1) -- ++(1,0) --
++(0,1) -- ++(1,0) --
++(0,1) -- ++(1,0) --
++(0,1) -- ++(1,0) --
++(0,-3)-- ++(1,0) -- 
++(0,1) -- ++(1,0) --
++(0,1) -- ++(1,0) --
++(0,1) -- ++(1,0) --
++(0,1) -- ++(1,0) --
++(0,1) -- ++(1,0) node [right] {$A_m(n)$};
\draw 
(5,0) node [below] {$n_1$} 
(9,0) node [below] {$n_2$} 
(11,0) node [below] {$n$} 
(17,0) node [below] {$n_3$};
\draw 
(3,0) node [below,blue] {$s_1$} 
(7,0) node [below,blue] {$s_2$} 
(13,0) node [below,blue] {$s_3$}
(15,0) node [below,blue] {$s_4$};
\draw [|<->|] (7,-2) to node [below] {$Z_n$} ++(4,0);
\end{tikzpicture}
\caption{The source sends fresh updates to the server in slots $N_k=n_k$, inducing the age process $A_i(n)$ at the input to the server. The server sends samples of the most recent update to the monitor in time slots $S_k=s_k$, inducing the age process $A_m(n)$ at the monitor.}
\label{fig:DTsampler}
\end{figure}

\subsection{Sampling the Source: The Age of Fresh Updates}
\label{sec:fresh}
Referring to Figure~\ref{fig:model}, the source can generate a fresh (age zero) update in a slot $n$ and forward it to the server in that same slot. This fresh update arrives at the server at the end of slot $n$ with age $1$.  The age process $A_i(n)=Z_n$ of an observer at the server input is reset at the start of slot $n+1$ to $Z_{n+1}=1$.  When a source generates fresh updates in slots $N_k=n_k$, the age $Z_n$ evolves as the sequence of staircases shown in Figure~\ref{fig:DTsampler}. 
From the observer's perspective, update $k$ {\em arrives} in slot $N_k+1$ with  interarrival time 
\begin{align}
Y_k=N_k+1- (N_{k-1}+1)=N_k-N_{k-1}.
\end{align}

Defining the indicator $\I{A}$ to be $1$ if event $A$ occurs and zero otherwise, we can employ Palm probabilities to calculate the age PMF 
\begin{align}
\prob{Z_n=z}=\limty{N}\frac{1}{N}\sum_{n=0}^{N-1} \I{Z_n=z}.\eqnlabel{PalmZ}
\end{align}
The sum on the right side of \eqnref{PalmZ} can be accumulated as rewards over each renewal period. In the $k$th renewal period, we set the  reward $R_k$ equal to the number of slots in the $k$th renewal period in which $Z_n=z$. If $Y_k\ge z$, then there is one slot in the renewal period in which $Z_n=z$ and thus the reward is $R_k=1$; otherwise $R_k=0$.  Thus, for $z=1,2,\ldots$,  $R_k=\I{Y_k\ge z}$. From renewal reward theory, it follows from \eqnref{PalmZ} that
\begin{align}\eqnlabel{Zpmf}
    \prob{Z_n=z}=\frac{\E{R_k}}{\E{Y_k}}=\frac{\prob{Y_k\ge z}}{\E{Y_k}},\quad z=1,2,\ldots.
\end{align}
This is the discrete-time version of a well-known distribution of the age of a renewal process. It follows from \eqnref{Zpmf} that the average age of the current update at the server has average age
\begin{align}
  \E{A_i(n)}=\E{Z_n}&=\sum_{z=1}^\infty z\prob{Z_n=z}=\frac{\E{Y_k^2}}{2\E{Y_k}}+\frac{1}{2}.
  \eqnlabel{EZ-input-age}
\end{align}

\subsection{Sampling the Server: The Age at the Monitor}
Now we examine age at the monitor for the DAD and RAD servers. As depicted in Figure~\ref{fig:DTsampler}, the server (whether DAD or RAD) sends its freshest update to the monitor at sample times $S_1,S_2,\ldots$.  These sample times $S_k$ also form a renewal process with iid inter-sample times $Y'_k=S_k-S_{k-1}$.   For the DAD server, $Y'_k=\tau$. while for the RAD server, the $Y'_k$ are geometric $(\mu)$ random variables.
We now analyze the average age at the monitor in terms of the moments of $Y'_k$. 

The age process $A_i(n)$ at the input to the server is identical to the $Z_n$ process defined in Section~\ref{sec:fresh}. When the server sends its most recent update to the monitor in slot $S_k$, this update has age $A_i(S_k)$ at the start of the slot. The monitor receives the update at the end of the slot with age $A_i(s_k)+1$. Thus, at the start of slot $s_k+1$, the age at the monitor is reset to $A_m(s_k+1)=A_i(s_k)+1$.  Graphically, this is depicted in Figure~\ref{fig:DTsampler}. 

To describe the monitor age $A_m(n)$ in an arbitrary slot $n$, we look backwards in time and define $Z'_n$ as the age of the renewal process defined by the inter-renewal times $Y'_k$ associated with sampling the server.   In Fig.~\ref{fig:DTsampler} for example,  in slot  $n=11$, the last update was sampled by the server at time $s_2=7$ and $Z'_{11}=11-7=4$. 
We note that the $Z'_n$ process is the same as the $Z_n$ process, modulo the inter-renewal times now being labeled $Y'_k$ rather than $Y_k$. In particular the PMF $\prob{Z'_n=z}$ and expected value $\E{Z'_n}$ are described by \eqnref{Zpmf} and \eqnref{EZ-input-age} with $Y_k'$ replacing $Y_k$. 

The age at the monitor in slot $n$  is then 
\begin{align}
    A_m(n)=A_i(n-Z'_n)+Z'_n.
\end{align}
When the age process $A_i(n)$ at the input to the monitor is stationary and the sampling process  that induces $Z'_n$ is independent of the $A_i(n)$ age process, it follows that $A_i(n)$ and $A_i(n-Z'_n)$ are identically distributed. Thus,
the average age at the monitor is
\begin{align}
    \E{A_m(n)}&=\E{A_i(n-Z'_n)}+\E{Z'_n}\nn&=\E{A_i(n)}+\E{Z'_n}.
\end{align}
Employing  \eqnref{EZ-input-age} to evaluate both $\E{A_i(n)}$ and $\E{Z'_n}$, we obtain
\begin{align}
    \E{A_m(n)}= \frac{\E{Y_k^2}}{2\E{Y_k}}+ \frac{1}{2}+ \frac{\E{(Y'_k)^2}}{2\E{Y'_k}} +\frac{1}{2}.
\end{align}

For both the DAD and RAD servers, the source generates packets as a rate $\lambda$ Bernoulli process with 
$\E{Y_k}=1/\lambda$ and $\E{Y_k^2}=(2-\lambda)/\lambda^2$. It follows from \eqnref{EZ-input-age} that $\E{A_i(n)}=1/\lambda$. For the DAD server, $Y_k'=\tau$ is deterministic and \eqnref{EZ-input-age} implies $\E{Z'_n}=(\tau+1)/2$. For the RAD server, the inter-sample times $Y_k'$ at the monitor   are geometric $(\mu)$ and $\E{Z'_n}=1/\mu$. With these observations, we have the following theorem.
\begin{theorem}\thmlabel{DAD-RAD-age}
When the source emits updates as a rate $\lambda$ Bernoulli process, the average age at the monitor is
\begin{align}
    \E{A_m(n)}&=\begin{cases} \displaystyle
    {1}/{\lambda}+(\tau+1)/2, & \text{DAD server,}\\
    {1}/{\lambda}+{1}/{\mu}, & \text{RAD server.}
    \end{cases}
\end{align}
\end{theorem}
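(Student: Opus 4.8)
The plan is to treat \Thmref{DAD-RAD-age} as a direct evaluation of the age decomposition built up in the preceding subsection, since all of the structural work is already done: the sampling interpretation of the monitor, the renewal-reward computation of the renewal-process age in \eqnref{EZ-input-age}, and the reduction $\E{A_m(n)}=\E{A_i(n)}+\E{Z'_n}$ under the stationarity-and-independence hypothesis. What remains is only to identify the inter-event distributions for each server and plug in their first two moments.

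First I would record the source statistics. Because the source emits fresh updates as a rate-$\lambda$ Bernoulli process, the inter-generation times $Y_k=N_k-N_{k-1}$ are geometric$(\lambda)$ on $\{1,2,\dots\}$, giving $\E{Y_k}=1/\lambda$ and $\E{Y_k^2}=(2-\lambda)/\lambda^2$. Substituting into \eqnref{EZ-input-age} gives the server-input age $\E{A_i(n)}=(2-\lambda)/(2\lambda)+1/2=1/\lambda$. I would display this cancellation explicitly, since the additive $1/2$ is a discrete-time correction (absent in the continuous-time renewal-age formula) that is easy to drop, and it is the one spot where a mis-transcribed moment would silently corrupt the answer.

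Next I would treat the two servers through their sampling moments $Y'_k$. For DAD the inter-sample time is deterministic, $Y'_k=\tau$, so \eqnref{EZ-input-age} yields $\E{Z'_n}=\tau^2/(2\tau)+1/2=(\tau+1)/2$; for RAD the inter-sample times are geometric$(\mu)$, so the identical moment formulas used for $Y_k$ give $\E{Z'_n}=(2-\mu)/(2\mu)+1/2=1/\mu$. Adding $\E{A_i(n)}=1/\lambda$ to each sampling term then produces the two cases of the claimed expression, completing the proof.

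There is no real obstacle here beyond this bookkeeping, but the one genuine care-point is to confirm that the hypotheses behind $\E{A_i(n-Z'_n)}=\E{A_i(n)}$ actually hold for each policy, namely that the server's dump process is independent of the source's fresh-update process and that the input age process is stationary. Both hold because the $\tau$-periodic DAD clock and the Bernoulli$(\mu)$ RAD dump decisions are, by construction, independent of the arrivals; I would state this independence explicitly rather than leave it implicit, after which the theorem follows from the substitutions above.
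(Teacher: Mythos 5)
Your proposal is correct and follows essentially the same route as the paper: it uses the renewal-age formula \eqnref{EZ-input-age}, the decomposition $\E{A_m(n)}=\E{A_i(n)}+\E{Z'_n}$ under stationarity and independence of the sampling process, and then substitutes the moments of $Y_k$ (geometric$(\lambda)$), $Y'_k=\tau$ (DAD), and $Y'_k$ geometric$(\mu)$ (RAD), exactly as the paper does in the paragraphs preceding the theorem. Your explicit verification that the dump processes are independent of the arrival process is a welcome touch the paper leaves largely implicit, but it does not change the argument.
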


\section{Maximal Leakage vs Average AoI}

\begin{figure}[t]
    \centering
    \includegraphics[width=0.5\textwidth]{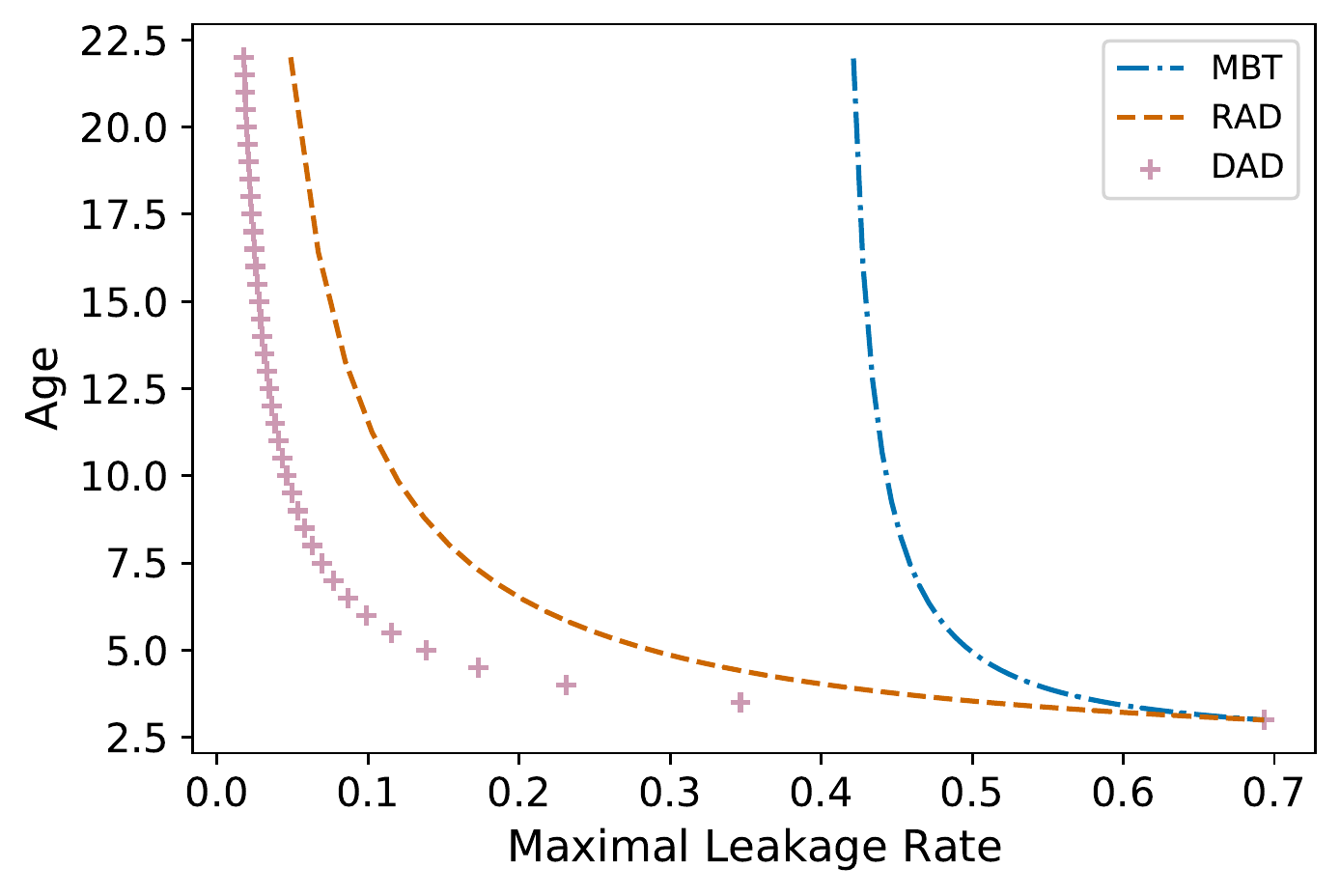}
    \caption{The age vs.~maximal leakage rate  for the MBT $(\alpha=1)$, DAD, and RAD servers with arrival rate $\lambda=0.5$. The service rate $\mu$ varies over $[0.524,1]$ for Geo/Geo/1 and over $[0.05,1]$ for RAD. For DAD, $\tau$ varies from 1 to 39. }
    \label{AvsL}
\end{figure}

Fig.~\ref{AvsL} shows the variation of age with maximal leakage rate for MBT with $\alpha=1$ (which is Geo/Geo/1), DAD, and RAD systems with arrival rate $\lambda=0.5$. As the service rate for these systems increases, the maximal leakage rate for the three service policies increases and the age decreases. Note that for the service rate of $1$ ($\mu=1$ or $\tau=1$), the three systems have identical service service processes and thus the same  average age of 3 slots. 

At other values of the service rate, the DAD system has better age than the other two systems for the same maximal leakage rate. The DAD system keeps the age small by not queueing packets. It delivers only the most recent update packet. From \Thmref{DAD-RAD-age},  we see for the DAD and RAD systems that for a fixed value of the maximal leakage rate (as specified by $\tau$ or $\mu$) the age is decreasing in the source rate $\lambda$ and this is illustrated in  Fig.~\ref{RandomDeterministic}.

\begin{figure}[t]
    \centering
    \includegraphics[width=0.5\textwidth]{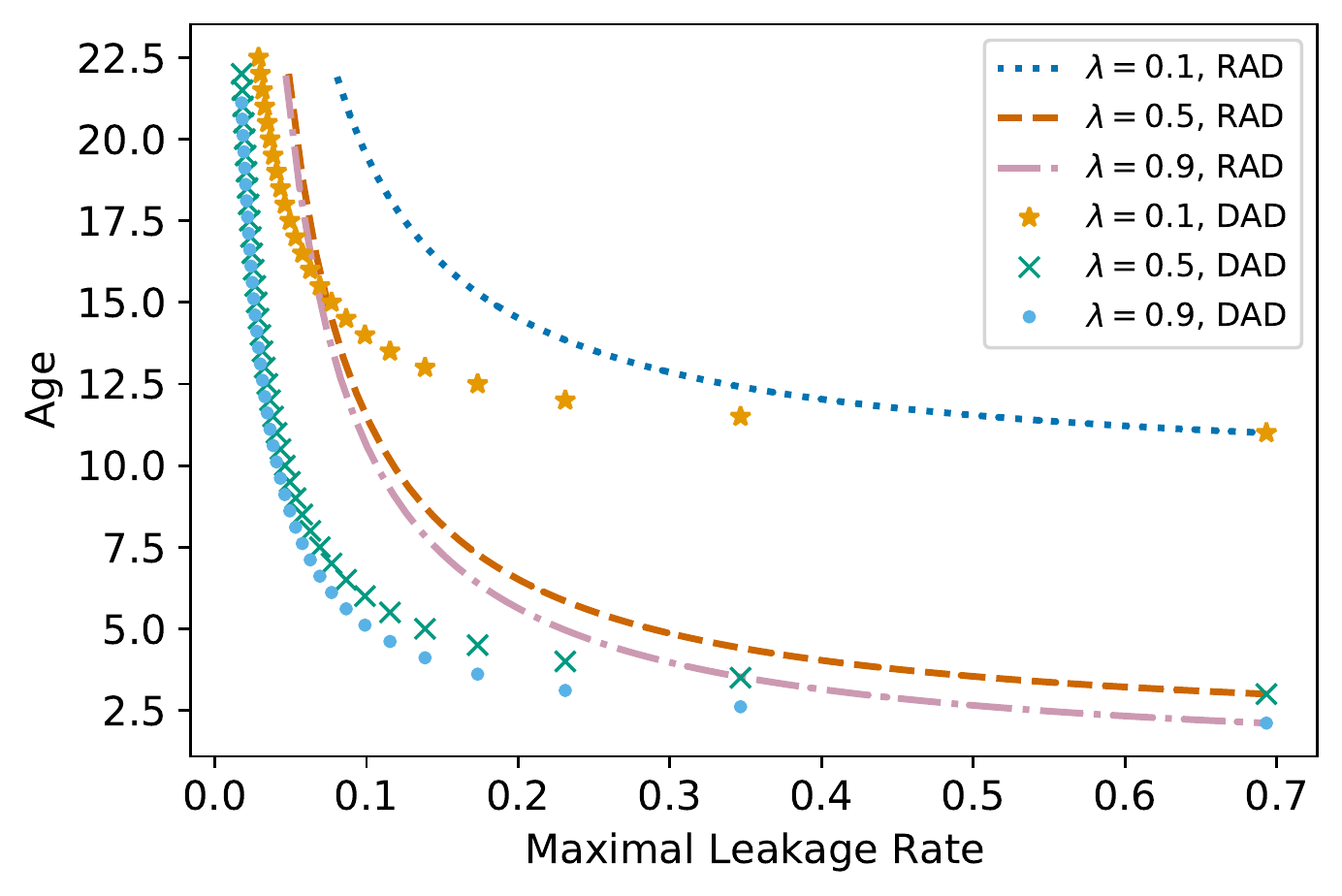}
    \caption{The age vs.~maximal leakage rate for the RAD and DAD policies for $\lambda=0.1$, $\lambda=0.5$ and $\lambda=0.9$. }
    \label{RandomDeterministic}
\end{figure}

In Fig.~\ref{Geo} we show the variation of the age with the maximal leakage rate for the MBT server for various arrival rates $\lambda$. With fixed $\lambda$ and $\alpha=1$, the server controls the leakage via the service rate $\mu$. As $\mu$ is reduced and approaches $\lambda$, the maximal leakage is reduced but the age blows up as 
the queue backlog increases. 
In fact, we can achieve better trade-offs by varying both $\mu$ and the admission probability $\alpha$. 
For a given $\lambda$, we can choose $\alpha$ to operate at an effective arrival rate $\alpha\lambda<\lambda$. This stabilizes the queue by  throwing away arrivals to keep the backlog small. With $\lambda$ fixed, for each $\mu$ (which specifies the leakage), the $\alpha\in(0,1]$ that minimizes $\E{A_{\text{MBT}}}$ in \eqnref{GeoAge} is calculated. This is shown in Fig.~\ref{Geo} by the solid blue-orange-green ``$\alpha\lambda$'' trade-off curve. 
The blue segment is achievable for $\lambda\ge 0.1$, the orange segment is achievable for $\lambda\ge 0.5$ and the green segment is achievable for $\lambda\ge 0.9$.
%

%


\begin{figure}[t]
    \centering
    \includegraphics[width=0.5\textwidth]{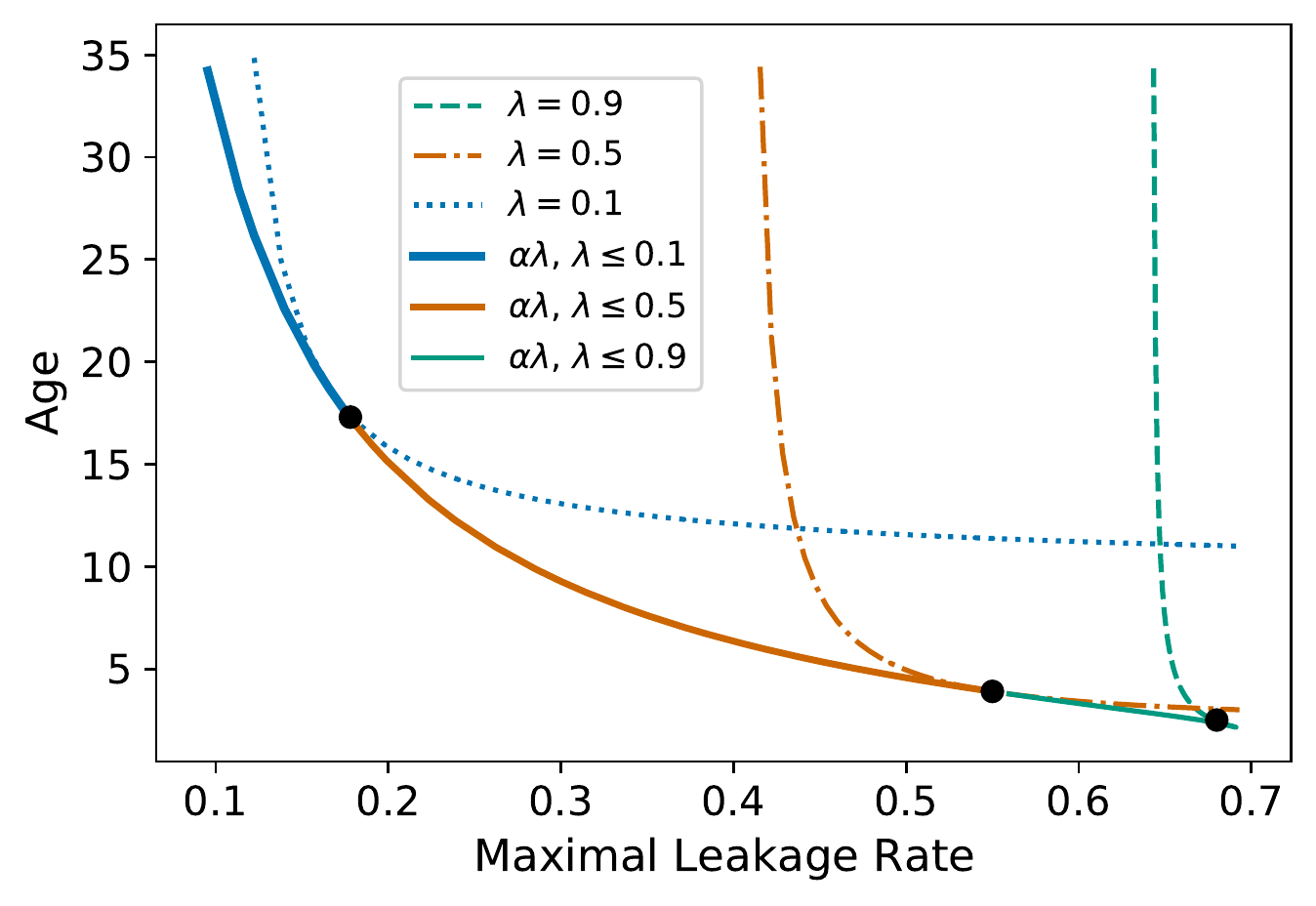}
    \caption{The age vs.~maximal leakage rate for the MBT  system: for $\alpha=1$, the $\lambda=0.1$, $\lambda=0.5$ and $\lambda=0.9$ age-leakage trade-off curves are obtained 
    by varying $\mu$ over the interval $[\lambda+\epsilon,1]$.  With $\lambda$ fixed, for each $\mu$ (which specifies the leakage), the $\alpha\in(0,1]$ that minimizes $\E{A_{\text{MBT}}}$ in \eqnref{GeoAge} is also calculated. This yields the solid blue-orange-green trade-off curve. The blue segment shows $\alpha\lambda\in [0.054,0.1]$ and is achievable for $\lambda\ge 0.1$. The orange segment shows $\alpha\lambda\in [0.1,0.5]$ and the green segment shows $\alpha\lambda\in [0.5,0.9]$; these segments are achievable for $\lambda\ge 0.5$ and $\lambda\ge 0.9$ respectively.}
    \label{Geo}
\end{figure}

\section{Conclusions and Future Work}

In this work we examined the trade-off between the maximal leakage and the average age for three  service policies. 
For a given arrival rate $\lambda$, the DAD server is constrained by integer values of $\tau$ to certain age-leakage operating points. For the same maximal leakage, neither the RAD server nor the MBT server could achieve the same average age. 
However, the MBT and the RAD servers have the flexibility to choose the service rate $\mu$ continuously. In addition, the MBT server can choose to thin the arrival process by operating at an effective arrival rate $\alpha\lambda$ to achieve better age-leakage trade-off.  

Our initial results here suggest several interesting avenues for future investigation. We limited our analysis to Bernoulli arrival processes that simplify the leakage analysis because all input sequences $x^n$ are in the support set. Extending the analysis beyond Bernoulli arrival processes could change which policies have more favorable trade-offs. Other arrival processes may not have support over all possible input sequences $x^n$ which may make the calculation of maximal leakage more challenging. Likewise, we have limited our attention to specific server policies. More generally, we would like to find ``optimal'' policies (within a given class) to manage the age-leakage trade-off for a given arrival process or arrival rate. Finally, understanding these trade-offs in networks of status updating systems might open up new avenues for modeling the interplay between delay and privacy. 




\clearpage

\bibliographystyle{IEEEtran}
\bibliography{paper_draft.bib}
\newpage\clearpage
\section{Appendix}
\subsection*{Proof of Lemma~\ref{max_claim}}
We prove this claim by induction. We start by showing the claim is true for $n=1$.
For $n=1$,
\begin{align}
    \P(Y_1=1|X_1=1)&=\mu, &
    \P(Y_1=1|X_1=0)&=0\eqnlabel{y1givenx11-x10}.
\end{align}
It follows from \eqnref{y1givenx11-x10} that 
\begin{align}
    \max_{x\in \mathcal{X}} P(Y_1=1|X_1=x)=\mu.\eqnlabel{maxy1}
\end{align}
Similarly,
\begin{align}
    \P(Y_1=0|X_1=1)&=1-\mu, &
    \P(Y_1=0|X_1=0)&=1\eqnlabel{y10givenx11-x10}.
\end{align}
It follows from \eqnref{y10givenx11-x10} that 
\begin{align}
    \max_{x\in \mathcal{X}} P(Y_1=0|X_1=x)&=1.\eqnlabel{maxy2}
\end{align}
From \eqnref{maxy1} and \eqnref{maxy2} we can say, 
\begin{align}
    \max_{x\in \mathcal{X}} \pmf{Y_1|X_1}{y_1|x}=\mu^{y_1},
\end{align}
and the maximum is achieved when $x_1=y_1$.
Thus the claim holds for $n=1$.

Now let us say the claim holds for some $n=k$.
Then,
\begin{align}
    \max_{{x^k \in \mathcal{X}}^k} \pmf{Y^k|X^k}{y^k|x^k}=\mu^ {\sum_{i=1}^{k} y_i},
\end{align}
and this maximum is achieved when $x^k=y^k$.

Now we need to show the claim holds for $n=k+1$.
\begin{align}
\spmf{Y^{k+1}|X^{k+1}}{y^{k+1}|x^{k+1}}
&=\spmf{Y^k|X^{k+1}}{y^{k}|x^{k+1}}
\spmf{Y_{k+1}|Y^k,X^{k+1}}{y_{k+1}|y^k,x^{k+1}}.
\end{align}
Since the departure process until time $k$ does not depend on the arrival at time $k+1$, we can write
\begin{align}
    \pmf{Y^k|X^{k+1}}{y^k|x^{k+1}}
    =\pmf{Y^{k}|X^k}{y^{k}|x^k}.
\end{align}
Hence,
\begin{align}
    &\max_{{x^{k+1} \in \mathcal{X}^{k+1}}} \pmf{Y^{k+1}|X^{k+1}}{y^{k+1}|x^{k+1}}\nn
&=\max_{{x^{k+1}} \in {\mathcal{X}}^{k+1}}\spmf{Y^{k}|X^k}{y^{k}|x^k}\spmf{Y_{k+1}|Y^k,X^{k+1}}{y_{k+1}|y^k,x^{k+1}}\nn
&\le [\max_{{\hat{x}^{k+1}} \in {\mathcal{X}}^{k+1}}\spmf{Y^{k}|X^k}{y^{k}|\hat{x}^k}][\max_{{x^{k+1}} \in {\mathcal{X}}^{k+1}}\spmf{Y_{k+1}|Y^k,X^{k+1}}{y_{k+1}|y^k,x^{k+1}}]\nn
&= [\max_{{\hat{x}^{k}} \in {\mathcal{X}}^{k}}\spmf{Y^{k}|X^k}{y^{k}|\hat{x}^k}][\max_{{x^{k+1}} \in {\mathcal{X}}^{k+1}}\spmf{Y_{k+1}|Y^k,X^{k+1}}{y_{k+1}|y^k,x^{k+1}}].
\eqnlabel{twomax}
\end{align}

By the induction hypothesis, 
\begin{align}
    \max_{\hat{x}^{k}\in \mathcal{X}^{k}} \pmf{Y^k|X^k}{y^k|\hat{x}^k}=\mu^ {\sum_{i=1}^{k} y_i},\eqnlabel{max1}
\end{align}
and the maximum is achieved when $\hat{x}^k=y^k$. For the second factor on the right side of \eqnref{twomax}, we consider the cases $Y_{k+1}=1$ and $Y_{k+1}=0$ separately. In both cases, we will employ the binary random sequence $W_k$ such that $W_k=1$ if there is a packet that can be served at time $k$ and otherwise $W_k=0$. We can write $W_{k+1}=g_{k+1}(Y^k,X^{k+1})$,  a deterministic function  of $Y^{k}$ and $X^{k+1}$. This implies
\begin{align}
\eqnlabel{PYkplus1}
&\pmf{Y_{k+1}|Y^k,X^{k+1}}{y_{k+1}|y^k,x^{k+1}}\nn
&=\pmf{Y_{k+1}|Y^k,X^{k+1},W_{k+1}}{y_{k+1}|y^k,x^{k+1},g_{k+1}(y^k,x^{k+1})}.
\end{align}
Moreover, there is a Markov chain \begin{align}
 (Y^k,X^{k+1})\to W_{k+1}\to Y_{k+1}
 \eqnlabel{markov-YX-W-Y1}
\end{align}
since
\begin{subequations}
\eqnlabel{Pygivenw} 
\begin{align}
&\pmf{Y_{k+1}|Y^k,X^{k+1},W_{k+1}}{y_{k+1}|y^k,x^{k+1},0}\nn
&\qquad =
\pmf{Y_{k+1}|W_{k+1}}{y_{k+1}|0}
=
\begin{cases} 1 & y_{k+1}=0,\\
0 & y_{k+1}=1,
\end{cases}\\
&\pmf{Y_{k+1}|Y^k,X^{k+1},W_{k+1}}{y_{k+1}|y^k,x^{k+1},1}\nn
&\qquad=
\pmf{Y_{k+1}|W_{k+1}}{y_{k+1}|1}
=
\begin{cases} 1-p & y_{k+1}=0,\\
p & y_{k+1}=1.
\end{cases}
\end{align}
\end{subequations}
It follows from \eqnref{PYkplus1} and \eqnref{markov-YX-W-Y1} that 
\begin{align}
&\pmf{Y_{k+1}|Y^k,X^{k+1}}{y^{k+1}|y^k,x^{k+1}}\nn
&\qquad=\pmf{Y_{k+1}|W_{k+1}}{y^{k+1}|g_{k+1}(y^k,x^{k+1})}.\eqnlabel{PYgivenWg}
\end{align}
Now consider the maximization of the second factor in \eqnref{twomax}.  For the case $Y_{k+1}=0$, it follows 

\begin{align}
&\max_{x^{k+1}}\pmf{Y_{k+1}|Y^k,X^{k+1}}{0|y^k,x^{k+1}}\nn
&\qquad =\max_{x^{k+1}}\pmf{Y_{k+1}|W_{k+1}}{0|g_{k+1}(y^k,x^{k+1})}\le 1,\eqnlabel{unitybound}
\end{align}
where the unity upper bound 
holds trivially.  However, it follows from \eqnref{Pygivenw} 
that this  unity upper bound is achieved by any $x^{k+1}$ that ensures $W_{k+1}=0$.  Given $Y^k=y^k$, we observe that the upper bound in 
\eqnref{unitybound} is achieved by $(x^k,x_{k+1})=(y^k,0)$. In particular, given the output sequence $y^k$, the input $x^k=y^k$ implies that the system is idle at the end of slot $k$ and, in this case, it follows that  $x_{k+1}=0$ implies $W_{k+1}=0$.
Thus for the case that $Y_{k+1}=0$, we have shown that $\pmf{Y_{k+1}|Y^k,X^{k+1}}{|y^k,x^{k+1}}$ is maximized over $x^{k+1}\in\X^{k+1}$ by $x^{k+1}=(x^k,x_{k+1})=(y^k,0)=y^{k+1}$. 

Now for the case  $Y_{k+1}=1$, it follows from \eqnref{Pygivenw}
\begin{align}
    &\max_{x^{k+1}}\pmf{Y_{k+1}|Y^k,X^{k+1}}{1|y^k,x^{k+1}}\nn
    &=\max_{x^{k+1}}\pmf{Y_{k+1}|W_{k+1}}{1|g_{k+1}(y^k,x^{k+1})}\le p.\eqnlabel{bound2}
\end{align}
The upper bound is achieved by any $x^{k+1}$ that ensures $W_{k+1}=1$. No matter what the past history, $x^{k+1}=1$ ensures $W_{k+1}=1$. It follows that the upper bound in \eqnref{bound2} is achieved by $(x^k,x_{k+1})=(y^k,1)$. Thus for the case that $Y_{k+1}=1$, we have shown that $\pmf{Y_{k+1}|Y^k,X^{k+1}}{1|y^k,x^{k+1}}$ is maximized over $x^{k+1}\in\X^{k+1}$ by $x^{k+1}=(x^k,x_{k+1})=(y^k,1)=y^{k+1}$. 

Now  returning to \eqnref{twomax}, the first factor is maximized by $\hat{x}^k=y^k$ and the second factor is maximized by $x^{k+1}=y^{k+1}$ hence the product is maximized by $x^{k+1}=(\hat{x}^k,x_{k+1})=(y^k,1)=y^{k+1}$. 

Hence,
\begin{align}
    \max_{x^{k+1}\in X^{k+1}} \pmf{Y_{k+1}|Y^k,X^{k+1}}{y^{k+1}|y^k,x^{k+1}}
    &=\mu^ {y_{k+1}}\eqnlabel{max2}
\end{align}

From \eqnref{twomax}, \eqnref{max1} and \eqnref{max2} we get, 
\begin{align}
    \max_{x^{k+1}\in X^{k+1}} \pmf{Y^{k+1}|X^{k+1}}{y^{k+1}|x^{k+1}}
    =\mu^ {\sum_{i=1}^{k+1} y_i},
\end{align}
and the maximum is achieved when $x^{k+1}=y^{k+1}$.

Thus, the claim holds for $n=k+1$, completing the proof of Lemma~\ref{max_claim}.

\end{document}